\declaretheorem[style=plain,qed=$\blacksquare$]{theorem}
\declaretheorem[style=plain,name=Definition,qed=$\blacksquare$]{Definition1}
\declaretheorem[style=plain,name=Remark,qed=$\blacksquare$]{remark}
\def\mc{\ensuremath\mathcal}
\begin{document}
	%\addtolength{\voffset}{+0.069 in}
	%\sloppy
	%\doublespacing
	
	%% Paper Title
	%% You can use linebreaks \\ within to get better formatting as
	%% desired. 
	\title{Device-to-Device Secure Coded Caching\thanks{This work was supported in part by the National Science Foundation Grants CNS 13-14719 and CCF 17-49665. This work was presented
			in part at Asilomar Conference on Signals, Systems, and Computers, Pacific
			Grove, CA, USA, November 2016.}}
	%\vspace{.2 in}
	\author{Ahmed A. Zewail}
	\author{Aylin Yener}
	\affil{\normalsize Wireless Communications and Networking Laboratory (WCAN)\\
		Electrical Engineering Department\\
		The Pennsylvania State University, University Park, PA 16802.\\
		zewail@psu.edu \qquad yener@engr.psu.edu}
	\maketitle
	\vspace{-0.8in}
%\begin{center}
%\today
%\end{center}

\begin{abstract}
This paper studies device to device (D2D) coded-caching with information theoretic security guarantees. A broadcast network consisting of a server, which has a library of files, and end users equipped with cache memories, is considered. Information theoretic security guarantees for confidentiality are imposed upon the files. The server populates the end user caches, after which D2D communications enable the delivery of the requested files. Accordingly, we require that a user must not have access to files it did not request, i.e., secure caching.  First, a centralized coded caching scheme is provided by jointly optimizing the cache placement and delivery policies. Next, a decentralized coded caching scheme is developed that does not require the knowledge of the number of active users during the caching phase. Both schemes utilize non-perfect secret sharing and one-time pad keying, to guarantee secure caching. Furthermore, the proposed schemes provide secure delivery as a side benefit, i.e., any external entity which overhears the transmitted signals during the delivery phase cannot obtain any information about the database files. The proposed schemes provide the achievable upper bound on the minimum delivery sum rate. Lower bounds on the required transmission sum rate are also derived using cut-set arguments indicating the multiplicative gap between the lower and upper bounds. Numerical results indicate that the gap vanishes with increasing memory size. Overall, the work demonstrates the effectiveness of D2D communications in cache-aided systems even when confidentiality constraints are imposed at the participating nodes and against external eavesdroppers. 
\end{abstract}
%\vspace{-0.1in}
\begin{IEEEkeywords}
Device-to-device communications, coded caching, secure caching, secure delivery, secret sharing.\end{IEEEkeywords}
%\vspace{-.09 in}
\section{Introduction}
Over the past decade, wireless communication systems have transformed from being limited to serving low data rates, e.g., voice calls and text messages, to offering dependable high data rate services, notably, for video content. The demand in massive amounts of data will only increase going forward, leading to potential bottlenecks. Two potential solutions offered towards alleviating network congestion are \textit{device-to-device (D2D) communications} and \textit{caching}. The former shifts some of the traffic load from the core network to the end users, while the later shifts it from the peak to off-peak hours, i.e., to when the network resources is underutilized. More specifically, \textit{D2D communications} utilize the radio channel for end users to communicate directly instead of routing via the network infrastructure \cite{doppler2009device,tehrani2014device,asadi2014survey}, while \textit{caching} stores partial content that may be requested by users in the network in off-peak hours so as to reduce delivery rates to these users during peak-traffic hours\cite{dowdy1982comparative,almeroth1996use}. The seminal reference \cite{maddah2014fundamental} introduced coded caching and demonstrated that, designing the downloading of partial data in off-peak hours, and the delivery signal in peak-hours in a manner to serve multiple users' file demands simultaneously, provides gains that are above and beyond simply placing some partial content in the caches. In particular, it has been shown that jointly designing the cache placement and delivery phases provides a \textit{global caching gain} that results from the ability of serving multiple users by a single transmission, in addition to the \textit{local caching gain} that results from the fact that some of the requested data is locally available in the user's cache. There has been significant recent interest in caching systems, notably in designing coded caching strategies demonstrating gains in various network settings beyond the broadcast network setting of the original reference, see for example, \cite{maddah2016coding,niesen2017coded	,shariatpanahi2015multi,yang2018coded,bidokhti2017benefits,ibrahim2017centralized,ibrahim2017optimization,
zewail2017combination,ji2016fundamental}.

Caching in D2D communications have been pioneered in reference \cite{ji2016fundamental}. In particular, a network where a server, with database of $N$ files, each with size $F$ bits, connected to $K$ users, each equipped with a cache memory of size $MF$ bits, has been considered. In the cache placement phase, the server populates the cache memories of the users with partial content from the server's database. During the delivery phase, in contrast with the communication model in \cite{maddah2014fundamental}, the server remains inactive and the users' requests are to be satisfied via D2D communications only. Both centralized and decentralized schemes were provided. In the centralized schemes, the cache placement and delivery phases are jointly optimized, which requires the knowledge of the number of active users in the system while performing cache placement. In decentralized scheme, this knowledge is not necessary. The fundamental limits of coded caching in device-to-device networks have been further investigated in \cite{ji2016wireless,ji2017fundamental,shabani2016mobility, 
tebbi2017coded,chorppath2017network}. For instance, references \cite{ji2016wireless,ji2017fundamental,shabani2016mobility} have studied the impact of coded caching on throughput scaling laws of D2D networks under the protocol model in \cite{gupta2000capacity}.% Furthermore, reference 

Beside the need of reducing the network load during the peak hours, maintaining secure access and delivery is also essential in several applications, e.g., subscription services. These concerns can be addressed by the so called secure caching and secure delivery requirements studied in server based models. For \textit{secure delivery}  \cite{sengupta2015fundamental,awan2015fundamental,zewail2016coded,zewail2017combination}, any external entity that overhears the signals during the delivery phase must not obtain any information about the database files. 
 In particular, reference \cite{awan2015fundamental} has studied a device-to-device caching system with secure delivery. Utilizing one-time padding, a centralized scheme has been proposed by jointly optimizing the cache placement and delivery phase. The order-optimality of this scheme has been shown in \cite{awan2018bounds}, i.e., the multiplicative gap between the achievable delivery load, in \cite{awan2015fundamental} and the developed lower bound, in \cite{awan2018bounds}, can be bounded by a constant that is independent from the system's parameters. For \textit{secure caching} \cite{ravindrakumar2016fundamental,zewail2016coded,zewail2017combination}, each user should be able to recover its requested file, but must not gain any information about the contents of the files it did not request. %It is worth mentioning that in addition to investigating the simple setups in  \cite{maddah2014fundamental,ji2014fundamental,
%,awan2015fundamental,ravindrakumar2016fundamental}, references  %\cite{karamchandani2014hierarchical,maddah2015cache,shariatpanahi2015multi,
%ji2015caching,ji2015fundamental,naderializadeh2016fundamental,hachem2016layered,
%hachem2016degrees,bidokhti2016erasure,tang2016coded,zewail2016cns,zewail2017coded} have investigated cache-aided systems considering more elaborate network structures. %For instance, references \cite{maddah2015cache,naderializadeh2016fundamental,hachem2016layered,hachem2016degrees} have investigated  wireless interference networks, where the transmitters and receivers are equipped with cache memories, while references \cite{karamchandani2014hierarchical,
%ji2015caching,ji2015fundamental,tang2016coded,zewail2016cns,zewail2017coded} have studied two-hop setups where the network nodes have caching capabilities. In particular, the system operates over two successive phases. In the cache placement phase, the server stores functions of the $N$ files in the users' cache. During the delivery phase, each user requests one of the $N$ files, and advertises its request to all network users. After obtaining the requests of all network users, each user utilizes its cache to transmit a signal that helps the other users to satisfy their requests, while the server remains silent over this phase. Consequently, we reduce the load on the server during the delivery phase, which is a worthy objective \cite{tehrani2014device,doppler2009device,asadi2014survey,ji2016fundamental}.

 In this paper, we investigate the fundamental limits of secure caching in \textit{device-to-device} networks. That is, unlike the settings in \cite{zewail2016coded,ravindrakumar2016fundamental,zewail2017combination}, the server disengages during the delivery phase, and users' requests must be satisfied via D2D communications only. By the end of the delivery phase, each user must be able to reconstruct its requested file, while not being able to obtain any information about the remaining $N\!-\!1$ files. For this D2D model, we derive lower and upper bounds on the rate-memory trade-off. We propose a centralized caching scheme, where the server encodes each file using proper \textit{non-perfect secret sharing schemes} \cite{ito1989secret,ito1987secret,li2015optimal,blakley1984security,blakley1984security}, and generates a set of random keys \cite{shannon1949}. Then, the server carefully places these file shares and keys in the cache memories of the users. During the delivery phase, each user maps the contents of its cache memory into a signal transmitted to the remaining users over a shared multicast link. Next, motivated by the proposed schemes in \cite{jin2016new} under no secrecy requirements, we provide a semi-decentralized scheme, using a grouping-based approach that guarantees secure caching, and does not require the knowledge of the number of active users in the system while populating the users' cache memories. To evaluate the performance of these achievable schemes, we also develop a lower bound on the required transmission sum rate based on cut-set arguments. We show that the multiplicative gap between the lower and upper bounds is bounded by a constant. Furthermore, we observe numerically that this gap vanishes as the memory size increases.

 %, and for systems with realistic parameters, the lower and upper bounds are tight.
%One can expect a decline in the performance compared with the centralized coded caching schemes. Surprisingly, our numerical results shows that even with non-accurate estimates of the expected number of active user in the system, the performance of the decentralized schemes is very close to the one of the centralized schemes.  In this study, we show how to distribute the database content over the cache memories of a set of users, such that no user can gain any information about it from its cache, while all users' requests can be satisfied via exchanging signals between each other. 

By virtue of the D2D model, the delivery load has to be completely transferred from the server to the end users during cache placement in this network, so that no matter what file is demanded by a user, it can be delivered from other users. As such, imposing secure caching requirement will also facilitate secure delivery as we shall see in the sequel. In other words, for the proposed schemes, secure delivery \cite{sengupta2015fundamental,awan2015fundamental,zewail2016coded} is also satisfied as a byproduct.

This work demonstrates that D2D communications can effectively replace a server with full database access despite the fact that each user accesses only a portion of the database and that this is possible with a negligible transmission overhead, while keeping the users ignorant about the database contents. That is to say that, the performance of the system under investigation and the one in \cite{ravindrakumar2016fundamental} are very close to one another for realistic values of the system parameters. We note that while the centralized scheme and its performance were presented in brief in the conference paper \cite{zewail2016fundamental}, the decentralized coded caching scheme and the order-optimality results are presented for the first time in this paper, along with proof details of all results.

The remainder of the paper is organized as follows. In Section \ref{sec:sm}, we describe the system model. In Sections \ref{sec:ach} and \ref{sec:dec}, we detail the centralized and decentralized coded caching schemes, respectively. Section \ref{sec:lower} contains the derivation of the lower bound. In Section \ref{sec:numerical}, we demonstrate the system performance by numerical results. Section \ref{sec:con} summarizes our conclusions. In the following, we will use the notation $[L] \triangleq \{1,\ldots, L\}$, for a positive integer $L$.
%\vspace{-.15 in} 
\section{System Model}\label{sec:sm}
Consider a network where a server, with a database of $N$ files, $W_1,\ldots, W_N$, is connected to $K$ users. The files are assumed to be independent from one another, each has size $F$ bits and is uniformly generated. Each user is equipped with a cache memory with size $MF$ bits, i.e., each user is capable of storing $M$ files. We denote by $M$ the normalized cache memory size and define $Z_k$ to represent the contents of the cache memory at user $k$, where $k\in \{1,2,\ldots,K\}$. The system operates over two consecutive phases, as depicted in Fig. \ref{fig_sm}. 
\begin{figure}[t]
\centering
\includegraphics[scale=0.28]{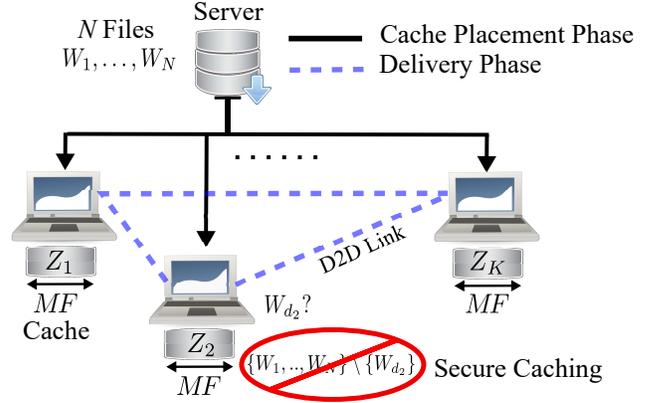}
\centering
\caption{Device-to-device secure coded caching system.}\label{fig_sm}
\end{figure}
\subsection{Cache Placement Phase} 
In this phase, the server allocates functions of its database into the users' cache memories without the knowledge of file demands. These possible allocations are designed to preserve the memory capacity constraint at each user. This is made precise by the following definition.% without the knowledge of its desired file in the near future during the following phase. 
\begin{Definition1}
(Cache Placement): In the cache placement phase, the server maps the files of its database to the cache memories of the users. In particular, the content of the cache memory at user $k$ is given by 
\begin{equation}
Z_k = \phi_k(W_1, W_2,\ldots,W_N), \qquad k =1,2,\ldots,K, 
\end{equation}
where $\phi_k: [2^F]^N\rightarrow [2^F]^M$, such that $H(Z_k)\leq MF$. 
\end{Definition1}
 In this work, aligned with the caching literature, e.g., \cite{sengupta2015fundamental,awan2015fundamental,ravindrakumar2016fundamental,
zewail2016coded,zewail2017combination}, we assume that the cache placement phase is secure, i.e., it is not overheard by any unauthorized entity and the cache contents of each user is not accessible to any other user.
\subsection{Delivery Phase}
During peak traffic, each user requests a file. We assume that the demand distribution is uniform for all users,
 and independent from one user to another \cite{maddah2014fundamental,ji2016fundamental}, i.e., each user can request each file with equal probability. The index of the file requested by user $k$ is $d_k\in[N]$, and $\bm d=(d_1,\ldots,d_K)$ represents the demand vector of all users. % at any request instance. 
Similar to \cite{ji2016fundamental}, we require that the delivery phase is carried our by D2D communications only, i.e., the server participates only in the cache placement phase. Therefore, we need the caches at the users to be able to store the whole library, collectively. Without secrecy requirements, we would need $KM\geq N$ to accomplish this. In Section \ref{sec:ach}, we will see that a larger total memory constraint will be required in order to satisfy the secrecy requirements. With the knowledge of the demand vector $\bm d$, user $k$ maps the contents of its cache memory, $Z_k$, into a signal that is transmitted to all network users over a noiseless interference-free multicast link. From the $K\!-\!1$ received signals and $Z_k$, user $k$ must be able to decode its requested file, $W_{d_k}$, with negligible probability of error. We have the following definition for encoding and decoding at each user. 
\begin{Definition1}
(Coded Delivery): The transmitted signal by user $k$ is given by 
\begin{equation}
X_{k,\bm d} = \psi_k(Z_k,\bm d), 
\end{equation}
where $\psi_k: [2^F]^M\times [N]^K\rightarrow [2^F]^{R_k}$ is the encoding function, $R_k$ is the normalized rate of the transmitted signal by user $k$ and $k \in [K]$. In addition, user $k$ recovers its requested file as
\begin{equation}
\hat W_{d_k} = \mu_k(Z_k,\bm d, X_{1,\bm d},\ldots,X_{k-1,\bm d},X_{k+1,\bm d},\ldots,X_{K,\bm d}),  
\end{equation}   
where $\mu_k: [2^F]^M\times [N]^K\times [2^F]^{\sum_{i\neq k}R_i}\rightarrow [2^F]$ is the decoding function, and $k \in[K]$.
\end{Definition1}
Let $R_T=\sum_{i=1}^K R_i$ be the normalized sum rate of the transmitted signals by all users.% at any request instance.
\subsection{System Requirements}
During the delivery phase, the server remains silent, and all users' requests must be satisfied via D2D communications. Therefore, we have the following reliability requirement.
\begin{Definition1}
(Reliability) For each user to recover its requested file from its received signals and the contents of its cache memory, we need 
\begin{equation}\label{reliableconst}
\max_{\bm d, k\in[K]} Pr(\hat W_{d_k}\neq W_{d_k})\leq \epsilon,
\end{equation}
 for any $\epsilon>0$. 
\end{Definition1}
We impose secure caching constraints on the system. In particular, we require that each user must be able to decode only its requested file, and not be able to obtain any information about the content of the remaining $N-1$ files. 
\begin{Definition1}
(Secure caching) For any $\delta_1>0$, we have
\begin{equation}\label{secrtiveconst}
\max_{\bm d, k\in[K]} I(\bm W_{-d_k}; \bm X_{-k,\bm d},Z_k)\leq \delta_1, 
\end{equation}
where $\bm W_{-d_k}\!\!\!=\!\!\{W_1,\ldots,W_N\} \backslash  \{W_{d_k}\}$, i.e., the set of all files except the one requested by user $k$ and $\bm X_{-k,\bm d}=\{X_{1,\bm d},\ldots,X_{K,\bm d} \}\setminus \{X_{k,\bm d}\}$, i.e., the set of all received signals by user $k$.  
\end{Definition1}

%In this paper, we focus mainly on studying the device-to-device caching system under the secure caching constraint. 
We aim to minimize the sum rate during the delivery phase under reliability and secure caching requirements. Formally, we have the following definition. 
\begin{Definition1}
The secure memory-rate pair $(M, R_T)$ is said to be \textbf{achievable} if $\forall \epsilon, \delta_1 >0$ and $F \rightarrow \infty$, there exists a set of caching functions, $\{\phi_k\}_{k=1}^K$, encoding functions, $\{\psi_k\}_{k=1}^K$, and decoding functions, $\{\mu_k\}_{k=1}^K$, such that (\ref{reliableconst}) and (\ref{secrtiveconst}) are satisfied. The optimal secure memory-rate trade-off is defined as $R_T^{*}=\inf\{R_T: (M,R_T) \mbox{  is achievable} \}.$
%\normalfont
%\begin{equation}
%R_T^{*}=\inf\{R_T: (M,R_T) \mbox{  is securely achievable} \}.
%\end{equation}
\end{Definition1}
We are also interested in the secure delivery requirement, defined below.  
\begin{Definition1}
(Secure Delivery) Any eavesdropper that overhears the transmitted signals during the delivery phase must not obtain any information about the contents of the data phase files. Therefore, we have 
\begin{align}\label{securedelivery}
\max_{\bm d} I(W_1,\ldots,W_N; X_{1,\bm d},\ldots,X_{K,\bm d})\leq \delta_2,  
\end{align}
for any $\delta_2>0$.
\end{Definition1}

\begin{remark}
We will see that for the D2D setting we consider, our proposed schemes for secure caching will automatically satisfy the secure delivery requirement.
\end{remark}

\begin{remark}
In general, secure delivery and secure caching requirements do not have to imply one another. For instance, if $M\geq N$, secure delivery is trivially satisfied by storing the entire database at each user during the cache placement phase violating the secure caching requirements. An example for the reverse scenario, i.e., where secure caching does not imply the secure delivery can be found in subsection \ref{specialcase}.%s, as from (\ref{secrtiveconst}), it is clear that each user must not be able to recover any file from the contents of its cache memory only. %Thus, the transmission rate is strictly positive even for the large values of $M$.   
\end{remark} 
 We aim to minimize the total delivery load, i.e., the total transmission rate, by designing the cache contents and the delivery strategy while maintaining the secure caching requirement. Since no user should be able to decode a file he did not request, and the requests of the users will be only revealed at the beginning of the delivery phase, we must design the cache's contents of each user in a way that does not reveal any information about the system files. This makes the cache placement problem relevant to the problem of multiple assignment in secret sharing \cite{ito1987secret,ito1989secret,li2015optimal}, in the sense that, we aim to distribute the library over the set of  end users such that the shares assigned to each of them cannot reveal any information about the files. Here the size of shares allocated to each user must not exceed its cache storage capacity, $MF$. Additionally, by the end of the delivery phase, each user must be able to decode only its requested file. With the D2D model, we require the system to maintain self-sustainability without the participation of the server during the delivery phase. Thus, the caches' contents at all users must be able to regenerate the library files. In the following two sections, we provide two schemes that minimize the delivery load while maintaining the systems' requirements.    
\section{Centralized Coded Caching Scheme}\label{sec:ach}
In this section, we consider a scenario where the server is able to perform cache placement in a centralized manner. That is, the server knows the total number of users in the system, $K$, at the beginning of the cache placement phase. We utilize non-perfect secret sharing schemes \cite{yamamoto1986secret,blakley1984security,secretsharing} to encode the database files. The basic idea of the non-perfect secret sharing schemes is to encode the secret in such a way that accessing a subset of shares does not reduce the uncertainty about the secret, and only accessing all shares does. For instance, if the secret is encoded into the scaling coefficient of a line equation, the knowledge of one point on the line does not reveal any information about the secret as there remain infinite number of possibilities to describe the line. One can learn the secret only if two points on the line are provided, and then can do so precisely. We will utilize \textit{non-perfect secret sharing schemes}, formally defined as follows.
\begin{Definition1} \cite{yamamoto1986secret,blakley1984security,secretsharing} For a file $W$ with size $F$ bits, an $(m,n)$ non-perfect secret sharing scheme generates $n$ shares, $S_1,S_2,\ldots,S_n$, such that accessing any $m$ shares does not reveal any information about the file $W$, i.e., \begin{equation}
I(W;\mathcal{S})=0, \quad \forall \mathcal{S}\subseteq\{S_1,S_2,\ldots,S_n\}, |\mathcal{S}|\leq m.
\end{equation}
Furthermore, we have 
\begin{equation}
H(W|S_1,S_2,\ldots,S_n)=0,
\end{equation}
i.e., the file $W$ can be reconstructed from the $n$ shares.
\end{Definition1}
For large enough $F$, an $(m,n)$ non-perfect secret sharing scheme exists with shares of size equals to $\frac{F}{n-m}$ bits \cite{yamamoto1986secret,blakley1984security,secretsharing}. We chose to use schemes from this class as they give shares with sizes equal to the secret size divided by the gap, $\frac{F}{n-m}$ bits. By contrast, perfect secret sharing schemes \cite{shamir1979share} give shares of size equal to the secret size, $F$ bits. Therefore, the non-perfect secret sharing schemes are more efficient in our case in terms of storage and delivery load.
\subsection{Cache Placement Phase}\label{gencache}
First, we present a scheme that works for $M=\frac{Nt}{K-t}+\frac{1}{t}+1$, and $t\in[K-1]$, noting that the remaining values of $M$ can be achieved by memory sharing \cite{maddah2014fundamental}. That is, for any value of $M$, we pick the most two adjacent values, $M_1$ and $M_2$, such that $M_i=\frac{Nt}{K-t_i}+\frac{1}{t_i}+1$, $i=1,2$, $t_i\in[K-1]$, and $M_1\leq M \leq M_2$. We determine the sharing parameter $\alpha\in [0,1]$, by solving the equation $M=\alpha M_1+ (1-\alpha) M_2$. Then, each file, $W_n$, is divided into two subfiles, $W_n^1$ and $W_n^2$, of sizes $\alpha F$ and $(1-\alpha) F$ bits, respectively. The achievability scheme is obtained by applying the scheme designed for the system with memory $M_i$ on the subfiles $W_n^i$, and $i=1,2$.

 As a first step, the server encodes each file in the database using a non-perfect secret sharing scheme \cite{yamamoto1986secret,blakley1984security,secretsharing}. In particular, a file, $W_n$, is encoded using a $\left(t {{K-1}\choose {t-1}},t{{K}\choose {t}}\right)$ non-perfect secret sharing scheme. Each share, with size $F_s$ bits, is denoted $S_{n,\mathcal{T}}^j$, where $j=1,\ldots,t$, and $\mathcal{T}\subseteq [K]$ with $|\mathcal{T}|=t$, and 
\begin{equation}
F_s=\frac{F}{t {{K}\choose {t}}-t {{K-1}\choose {t-1}}}=\frac{F}{(K-t){{K-1}\choose {t-1}}}.
\end{equation}
 We refer to the set $\mathcal{T}$ by the \textit{allocation set} as it determines how the shares will be allocated in the users' caches. In particular, the server places the shares $S_{n,\mathcal{T}}^j$, $\forall j, n$ in the cache of user $k$ whenever $k\in \mathcal{T}$. Also, the  parameter $t$ can be see as number of users that will store the same share.

Additionally, the server generates $(t+1){{K}\choose{t+1}}$ independent keys, i.e., they are independent from one another and independent from the library files. In particular, each key is uniformly distributed over $[2^{F_s}]$, and is denoted by $K_{\mathcal{T}_K}^i$, where $i=1,\ldots,t+1$, and $\mathcal{T}_K\subseteq [K]$ with $|\mathcal{T}_K|=t+1$. The server places the keys $K_{\mathcal{T}_K}^i$, $\forall i$, in user $k$'s cache if $k\in \mathcal{T}_K$, i.e., $\mathcal{T}_K$ represents the key allocation set. Therefore, the cached content by user $k$ at the end of the cache placement phase is given by 
\begin{equation}\label{cenZ}
Z_k=\left\{S_{n,\mathcal{T}}^i,K_{\mathcal{T}_K}^j: k\in  \mathcal{T},\mathcal{T}_K, \mbox{ and } \forall i,n,j  \right\}.
\end{equation}
We summarize the cache placement procedure in Algorithm \ref{alg_place1}. In the following remark, we verify that this placement satisfies the cache memory capacity constraint. 
\begin{remark}
In the aforementioned placement scheme, each user stores $t{{K-1}\choose{t-1}}$ shares of each file and $(t+1){{K-1}\choose{t}}$ distinct keys, thus the accumulated number of cached bits is given by
\begin{align}\label{memacc}
Nt&{{K-1}\choose{t-1}}F_s+(t+1){{K\!-\!1}\choose{t}}F_s=\frac{Nt}{K-t}F+(1\!+\!\frac{1}{t})F\!=MF.
\end{align}
It follows that we have 
%\small
\begin{align}
%t&=\frac{1+(M-1)K+\sqrt{\left(1+(M-1)K\right)^2-4K(N+M-1)}}{2(N+M-1)}\nonumber \\&
t=\frac{1+(M-1)K+\sqrt{\left(1-(M-1)K\right)^2-4KN}}{2(N+M-1)}.
\end{align}
Clearly, the proposed allocation scheme satisfies the cache memory capacity constraint at each user.
\end{remark}
\begin{algorithm}[t]
\begin{algorithmic}[1]
\REQUIRE $\{ W_{1}, \dots, W_{N}\}$ %$a_{\mc S}, \mc S \in \mc A_{\mc S} $
\ENSURE $ Z_k, k \in [K]$
\FOR{$l \in [N]$}
\STATE Encode $W_l$ using an $(t {{K-1}\choose {t-1}},t{{K}\choose {t}})$ non-perfect secret sharing scheme
$\rightarrow S_{l,\mathcal{T}}^{j}$, $j=1,\ldots,t$, and $\mathcal{T}\subseteq [K]$ with $|\mathcal{T}|=t$. 
 \ENDFOR
\FOR{$\mathcal{T}_K\subseteq [K]$ with $|\mathcal{T}_K|=t+1$}
\STATE Generate independent keys $K_{\mathcal{T}_K}^i$, $i=1,\ldots,t+1$.
 \ENDFOR
\FOR{$k \in [K]$}
\STATE $Z_k \leftarrow \left\{K_{\mathcal{T}_K}^i:  k\in \mathcal{T}_K, \forall i \right\} \bigcup \bigcup_{l \in [N]} \left\{S_{l,\mathcal{T}}^j:  k\in \mathcal{T}, \forall j \right\}$
\ENDFOR
\end{algorithmic}
 \caption{Cache placement procedure}\label{alg_place1}
\end{algorithm}
\begin{remark}\label{minmem}
We note that the minimum value of the normalized cache size, $M$, that is need to apply the proposed scheme is $M_{\text{min}}=M\vert_{t=1}=2+\frac{N}{K-1}$. For a system without secrecy requirements \cite{ji2016fundamental}, we need $M\geq \frac{N}{K}$, while with secure delivery, the scheme in \cite{awan2015fundamental} requires $M\geq 2+\frac{N-2}{K}$. It is evident that, with secrecy requirements, more memory is required, as the users not only cache from the data but also cache the secure keys.      
\end{remark} 
\vspace{-0.25 in}
\subsection{Coded Delivery Phase}\label{cendel}
At the beginning of the delivery phase, each user requests one of the $N$ files and the demand vector is known to all network users. To derive an upper bound on the required transmission sum rate, we focus our attention on the worst case scenario. We concentrate on the more relevant scenario of $N\geq K$.

The delivery procedure consists of ${K \choose {t+1}}$ transmission instances. At each transmission instance, we consider a set of users $\mathcal{S}\subseteq[K]$, where $|\mathcal{S}|=t+1$. We refer to $\mathcal S$ as the transmission set. For $k\in \mathcal{S}$, user $k$ multicasts the following signal of length $F_s$ bits
\begin{equation}
X_{k,\bm d}^{\mathcal{S}}=\oplus_{l \in \mathcal{S}\setminus\{k\} } S^j_{d_l,\mathcal{S} \setminus\{l\}}\oplus K^i_{\mathcal{S}}.
\end{equation}
Note that the index $i$ is chosen such that each key is used only once,  while the index $j$ is chosen to ensure that each transmission is formed by shares that had not been transmitted in previous transmissions by the other users in $\mathcal{S}$. For example, they can be chosen as the relative order of the user's index with respect to the indices of the remaining users in $\mathcal{S}$. Thus, in total, the transmitted signal by user $k$ can be expressed as
\begin{equation}
X_{k,\bm d}=\bigcup_{\mathcal{S}:\ \! k\in\mathcal{S}, \ \! \mathcal{S}\subseteq[K],|\mathcal{S}|=t+1 }\{X_{k,\bm d}^{\mathcal{S}}\}.
\end{equation} 
Observe that the cache memories of the users from any subset, $\mathcal{S}_t\subset\mathcal{S}$, with $|\mathcal{ S}_t|=t$ contain $t$ shares of the file requested by the user in $\mathcal{S}\setminus \mathcal{ S}_t$, as can be seen from (\ref{cenZ}). Thus, utilizing its cache contents, each user in $\mathcal{S}$ obtains $t$ shares from its requested file during this instance of transmission, i.e., the user in $\mathcal{S}\setminus \mathcal{ S}_t$ obtains the shares $S_{d_{\mathcal{S}\setminus \mathcal{S}_t},\mathcal{S}_t}^j \ \forall j$. 

Observe also that user $k$ belongs to ${K-1} \choose {t}$ different choices of such subsets of the users, thus at the end of the delivery phase, user $k$ obtains $t{{K-1} \choose {t}}$ \textit{new} shares of its requested files, in addition to the cached $t{{K-1} \choose {t-1}}$ shares. Therefore, user $k$ can decode its requested file from its $t{{K} \choose {t}}$ shares, i.e., the reliability requirement (\ref{reliableconst}) is satisfied. Delivery procedure is summarized in Algorithm \ref{alg_delv1}. %The following remark notes that the proposed scheme satisfies (\ref{secrtiveconst}) and (\ref{securedelivery}).

 \begin{algorithm}[t]
\begin{algorithmic}[1]
\REQUIRE $\bm d$
\ENSURE $X_{k,\bm d}, k\in [K] $
\FOR{$k\in [K]$}
\FOR{$\mc S  \in [K], |\mc S|=t+1, k\in \mc S$} 
\STATE  $X_{k,\bm d}^{\mathcal{S}}\leftarrow  \oplus_{l \in \mathcal{S}\setminus\{k\} }   S^j_{d_l,\mathcal{S} \setminus\{l\}}\oplus K^i_{\mathcal{S}}  $, for some choice of $i$ and $j$
\ENDFOR
\STATE $X_{k,\bm d}\leftarrow \bigcup_{\mathcal{S}\subseteq[\hat K], k \in \mc S} \{X_{k,\bm d}^{\mathcal{S}}\}$
\ENDFOR
\end{algorithmic}
 \caption{Delivery procedure}\label{alg_delv1}
\end{algorithm} 

\subsection{Rate Calculation}
Now, we focus our attention on calculating the required transmission rate. Note that there are $K \choose {t+1}$ different choices of the set $\mathcal{S}$. For each choice, $t+1$ signals of length $F_s$ bits are transmitted, thus the total number of the transmitted bits is given by
\begin{equation}
R_TF=(t+1){K \choose {t+1}} F_s=\frac{K}{t}F.
\end{equation} 
Consequently, we can achieve the following normalized sum rate
%\small  
%\begin{equation}
%R_T\leq\frac{2K(N+M-1)}{1+(M-1)K+\sqrt{\left(1+(M+1)K\right)^2-4K(N+M-1)}}.
%\end{equation}
\begin{equation}
R_T=\frac{2K(N+M-1)}{1+(M-1)K+\sqrt{\left(1-(M+1)K\right)^2-4KN}}.
\end{equation}
%\vspace{-.07 in}
Therefore, we can state the following theorem.% which defines an upper bound on the centralized secure coded caching sum rate.
\begin{theorem}\label{them:cen}
Under centralized placement, for $M=\frac{Nt}{K-t}+\frac{1}{t}+1$, and $t\in[K-1]$, the secure sum transmission rate is upper bounded by
\begin{equation}
R_T^*\leq R_T^C\leq\!\frac{2K(N+M\!-\!1)}{1+(M\!-\!1)K+\sqrt{\left(1\!-\!(M\!-\!1)K\right)^2\!-\!4KN}}.
\end{equation}
In addition, using memory sharing \cite{maddah2014fundamental}, we can achieve the convex envelope of the points given by the values $M\!=\!\frac{Nt}{K-t}\!+\!\frac{1}{t}\!+\!1$, and $t\in[K-1]$.
\end{theorem}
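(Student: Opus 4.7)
The plan is to verify, in order, the four obligations attached to the claim: (i) the placement of Algorithm~\ref{alg_place1} meets the per-user memory constraint $H(Z_k)\leq MF$; (ii) the delivery of Algorithm~\ref{alg_delv1} satisfies the reliability requirement \eqref{reliableconst}; (iii) the secure caching requirement \eqref{secrtiveconst} holds; (iv) the transmission sum rate equals the claimed expression, after which memory sharing extends the result to the convex envelope. Step (i) is already established by the accumulated-bits computation in \eqref{memacc}, so I would simply invert the relation $M=\frac{Nt}{K-t}+\frac{1}{t}+1$ for $t$ in terms of $(M,K,N)$ and record the memory feasibility as a lemma.

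For reliability, I would pick an arbitrary demand vector $\bm d$ and user $k$, and for every transmission set $\mathcal{S}$ with $k\in\mathcal{S}$ decompose the received signals $X^{\mathcal{S}}_{j,\bm d}$ with $j\in\mathcal{S}\setminus\{k\}$. Since $k\in\mathcal{S}$, user $k$ stored the key $K^{i}_{\mathcal{S}}$ during placement, and by the membership rule $k\in\mathcal{T}\iff S^{j'}_{n,\mathcal{T}}\in Z_k$, user $k$ holds every share appearing in the XOR except the one indexed by $\mathcal{S}\setminus\{k\}$, namely $S^{j'_k}_{d_k,\mathcal{S}\setminus\{k\}}$. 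Cancelling the key and the known shares yields exactly one new share of the requested file $W_{d_k}$. A double counting over $\binom{K-1}{t}$ such sets produces $t\binom{K-1}{t}$ fresh shares, which together with the cached $t\binom{K-1}{t-1}$ shares give the full $t\binom{K}{t}$ shares required to invoke the reconstruction property of the $(t\binom{K-1}{t-1},t\binom{K}{t})$ non-perfect secret sharing scheme, giving $\hat W_{d_k}=W_{d_k}$ deterministically for large enough $F$.

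The main obstacle is step (iii). The strategy is to split the side-information available to user $k$ into two disjoint classes. Signals $X^{\mathcal{S}}_{j,\bm d}$ with $k\notin\mathcal{S}$ are masked by the one-time key $K^{i}_{\mathcal{S}}$, which is independent of $(Z_k,\bm W)$ and does not appear in any other transmission (each key index is used exactly once); a standard one-time-pad argument then shows that these signals are independent of $(\bm W,Z_k)$ conditioned on the demand. Signals $X^{\mathcal{S}}_{j,\bm d}$ with $k\in\mathcal{S}$ are, by the reliability argument above, deterministic functions of $Z_k$ together with the single new share $S^{j'_k}_{d_k,\mathcal{S}\setminus\{k\}}$ of $W_{d_k}$. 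Consequently, the total information user $k$ possesses about any file $W_n$ with $n\neq d_k$ is confined to the $t\binom{K-1}{t-1}$ cached shares of that file. Since this number equals the non-perfect secret sharing threshold, the definition of the scheme gives $I(W_n;Z_k,\bm X_{-k,\bm d})=0$; a short union/chain rule argument across the $N-1$ non-requested files then yields \eqref{secrtiveconst}. The same argument, applied jointly over all files, also delivers \eqref{securedelivery} for free, establishing the remark on secure delivery.

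Step (iv) is then routine: there are $\binom{K}{t+1}$ transmission sets, each contributing $t+1$ blocks of length $F_s=F/((K-t)\binom{K-1}{t-1})$, so
\begin{equation}
R_T F = (t+1)\binom{K}{t+1} F_s = \frac{K}{t}F,
\end{equation}
and substituting the expression for $t$ obtained from the memory equation yields the closed form stated in the theorem. Finally, for a general $M$ between two adjacent corner points $M_1$ and $M_2$, I would split each file into two subfiles of sizes $\alpha F$ and $(1-\alpha)F$, apply the two corresponding schemes independently with freshly generated secret shares and keys, and note that independence of the two instances preserves both the reliability and secrecy guarantees, thereby achieving the convex envelope claimed in the final sentence of the theorem.
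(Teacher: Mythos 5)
Your proposal is correct and follows essentially the same route as the paper: memory accounting via the accumulated-bits computation, reliability by counting the $t\binom{K-1}{t}$ fresh shares recovered from the $\binom{K-1}{t}$ transmission sets containing user $k$, secrecy by splitting the received signals according to whether $k$ belongs to the transmission set (one-time-pad masking versus already-cached shares plus shares of $W_{d_k}$ only) and invoking the $t\binom{K-1}{t-1}$ threshold of the non-perfect secret sharing scheme, and the rate identity $(t+1)\binom{K}{t+1}F_s=\frac{K}{t}F$ followed by memory sharing. No substantive differences from the paper's argument.
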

 %This concludes the proof of Theorem $1$.the convex envelope of the above points, defined for each $M$, is also achievable.

\subsection{Secrecy Analysis}   
For user $k$, the cache's contents, $Z_k$ given by (\ref{cenZ}), contains only $t {{K-1}\choose {t-1}}$ shares, from each file, resulting from a $\left(t {{K-1}\choose {t-1}},t{{K}\choose {t}}\right)$ non-perfect secret sharing scheme. Therefore, $Z_k$, by itself, cannot reveal any information about the files to user $k$. 
 
During the delivery phase, if at any instance, user $k$ belongs to the transmission set, $\mathcal{S}$, then the transmitted signals are formed from the shares of  the requested file by user $k$, $W_{d_k}$, and shares that have been already placed in the cache of user $k$ during the cache placement phase, i.e., from $Z_k$. When user $k$ does not belong to the transmission set, all the transmitted signals are encrypted using one-time pads, unknown to user $k$, thus, user $k$ cannot gain any information from these signals \cite{shannon1949}. Therefore, the secure caching constraint, (\ref{secrtiveconst}), is satisfied.
	
We observe that the server has generated $(t+1){{K} \choose {t+1}}$ independent keys with lengths equal to the share size. Thus, with a proper selection of the encrypting key for each transmission, we can ensure a unique use of each key, i.e., one-time padding. The above discussion implies that  the secrecy of the transmitted signals, from any external wiretapper that accesses the network links during the delivery phase, is also guaranteed \cite{shannon1949}. One-time pads are essential to ensure the secure caching requirement in (\ref{secrtiveconst}), whereas the secure delivery requirement in (\ref{securedelivery}), is satisfied as a byproduct.

\subsection{An Illustrative Example}
%We demonstrate the aforementioned scheme by the following example.
 Consider a system with four users and a library consists of four files, $W_1,W_2,W_3,W_4$, i.e., $K=N=4$. Each user has a normalized memory size $M=\frac{11}{2}$, which gives us $t=2$ and indicates that each of the resulting shares will be cached by two different users. The server encodes each file using $(6,12)$ non-perfect secret sharing scheme. For a file, $W_n$, the server generates $12$ shares, which we label by $S_{n,\mathcal{T}}^i$ where $i=1,2$, $\mathcal{T}\subset\{1,2,3,4\}$ and $|\mathcal{T}|=2$, each of size $F/6$ bits.
 
  Furthermore, the server generates the set of keys $K_{\mathcal{T}_K}^j$, uniformly distributed over $\{1,\ldots,2^{F/6}\}$, where $j=1,2,3$, $\mathcal{T}_K\subset\{1,2,3,4\}$ and $|\mathcal{T}_K|=3$.

User $k$ stores the shares $S_{n,\mathcal{T}}^i$, and the keys $K_{\mathcal{T}_K}^j$ whenever $k\in\mathcal{T}$ and $k\in\mathcal{T}_K$, $\forall n,j,i$, respectively. Therefore, the cache contents at the users are given by 
\begin{equation}\nonumber
Z_{1}=\begin{Bmatrix} 
S_{n,12}^i,S_{n,13}^i,S_{n,14}^i, \ \forall n,i, K_{123}^j,K_{124}^j,K_{134}^j, \ \forall j
 \end{Bmatrix},
\end{equation}
\begin{equation}\nonumber
Z_{2}=\begin{Bmatrix} 
S_{n,12}^i,S_{n,23}^i,S_{n,24}^i, \forall n,i,
K_{123}^j,K_{124}^j,K_{234}^j, \ \forall j
\end{Bmatrix},
\end{equation}
\begin{equation}\nonumber
Z_{3}=\begin{Bmatrix} 
S_{n,13}^i,S_{n,23}^i,S_{n,34}^i, \forall n,i,
K_{123}^j,K_{134}^j,K_{234}^j, \ \forall j
\end{Bmatrix},
\end{equation}
\begin{equation}\nonumber
Z_{4}=\begin{Bmatrix} 
S_{n,14}^i,S_{n,24}^i,S_{n,34}^i, \forall n,i,
K_{124}^j,K_{134}^j,K_{234}^j, \ \forall j
\end{Bmatrix}.
\end{equation}  
Each user caches $6$ shares of each file. We observe that the caches will not be able to reveal any information about the unrequested files thanks to the non-perfect secret sharing encoding. Also, note that the cache capacity constraints at all the users are satisfied.

 Now, consider the delivery phase, where user $k$ requests the file $W_k$, i.e., $\bm d=(1,2,3,4)$. In this case, the users transmit the following signals.
\begin{align}\nonumber
X_{1,\bm d}=\begin{Bmatrix}  
&S_{2,13}^1 \oplus  S_{3,12}^1   \oplus K_{123}^1,  S_{4,13}^1 \oplus  S_{3,14}^1  \oplus  K_{134}^1, \nonumber \\
& \quad S_{2,14}^1 \oplus  S_{4,12}^1  \oplus K_{124}^1
 \end{Bmatrix},
\end{align}
\begin{align}\nonumber
X_{2,\bm d}=\begin{Bmatrix} 
&S_{1,23}^2\oplus S_{3,12}^2 \oplus K_{123}^2,  S_{4,23}^2\oplus S_{3,24}^2 \oplus K_{234}^2, \nonumber \\
& \quad S_{1,24}^2\oplus S_{4,12}^2 \oplus K_{124}^2
 \end{Bmatrix},
\end{align}
\begin{align}\nonumber 
X_{3,\bm d}=\begin{Bmatrix}  
& S_{1,23}^1\oplus S_{2,13}^2 \oplus K_{123}^3,  S_{4,13}^2\oplus S_{1,34}^1 \oplus K_{134}^2,
 \nonumber \\
&S_{2,34}^1\oplus S_{4,23}^1 \oplus K_{234}^3
 \end{Bmatrix},
\end{align}
\begin{align}\nonumber
X_{4,\bm d}=\begin{Bmatrix}  
&S_{1,24}^1\oplus S_{2,14}^2 \oplus K_{124}^3,  S_{1,34}^2\oplus S_{3,14}^2 \oplus K_{134}^3,
 \nonumber \\
&S_{2,34}^2\oplus S_{3,24}^1 \oplus K_{234}^1
 \end{Bmatrix}.
\end{align}
From its received signals, $X_{2,\bm d}$, $X_{3,\bm d}$ and $X_{4,\bm d}$, and utilizing its cached content, user $1$ gets $S_{1,23}^1$, $S_{1,23}^2$, $S_{1,24}^1$, $S_{1,24}^2$, $S_{1,34}^1$ and $S_{1,34}^2$. Thus, user $1$ can reconstruct its requested file, $W_1$, from its $12$ shares. Similarly, users $2$, $3$ and $4$ are able to decode files $W_2$, $W_3$ and $W_4$, respectively.

We observe that user $k$ will only obtain new shares of its requested file $W_k$, thus it cannot gain any information about the remaining files, $\{W_1,W_2,W_3,W_4\}\setminus \{W_k\}$. This is done by proper selection of the keys so that each user cannot gain any information about the remaining three files. In addition, each signal is encrypted using one-time pad which ensures the secrecy of the database files from any external eavesdropper as in \cite{awan2015fundamental}. In this delivery procedure, each user participates by $3$ distinct transmissions, each of size $F/6$ bits, thus $R_T^C=2$. Comparing with the system in \cite{ravindrakumar2016fundamental}, where the server is responsible for the delivery phase, we see that a normalized secure rate $\simeq 1.3$ is achievable, for the same system parameters. This difference is due to limited access of the shares at each user, unlike the case in \cite{ravindrakumar2016fundamental} where the server can access all shares during the delivery phase, i.e., the cost of having D2D delivery.

\subsection{Secure Caching without Secure Delivery for $M=N(K-1)$}\label{specialcase}
The scheme described above ensures that the requirements in (\ref{secrtiveconst}) (and (\ref{securedelivery})) are satisfied. The encryption keys are essential to achieve both. In the following, we study a special case where we can provide a scheme that achieves secure caching, i.e., satisfy (\ref{secrtiveconst}), without the necessity of satisfying the secure delivery constraint, i.e., (\ref{securedelivery}).  
More specifically, when $M=N(K-1)$, we can achieve a normalized rate equals to $\frac{K}{K-1}$ without utilizing encryption keys. In particular, each file is encoded using $((K-1)^2,K(K-1))$ non-perfect secret sharing scheme. The resulting shares, each of size $F_s=\frac{F}{K-1}$ bits, are indexed by $S_{n,i}^j$, where $n$ is the file index, $j=1,\ldots,K-1$, and $i=1,\ldots,K$. The server allocates the shares $S_{n,i}^j$,  $\forall j,n$ and $i\neq k$ in the memory of user $k$, i.e., 
\begin{equation}\label{censpecial}
Z_k=\{S_{n,i}^j:  \forall j,n  \mbox{ and } i\neq k\}.
\end{equation}
Thus, each user stores $N(K-1)^2$ shares, which satisfies the memory capacity constraint. 

At the beginning of the delivery phase, each user announces its request. Again, we assume that the users request different files. User $k$ multicasts the following signal to all other users 
\begin{equation}
X_{k,\bm d}=\oplus_{l \in [K]\setminus\{k\} } S^j_{d_l,l},
\end{equation}   
where $j$ is chosen to ensure that each transmission is formed by fresh shares which had not been included in the previous transmissions. From its received $K-1$ signals, user $k$ can extract the shares $S^j_{d_k,k}$, $\forall j$. By combining these shares with the ones in its memory, user $k$ recovers its requested file, $W_{d_k}$. The total number of bits transmitted under this scheme is $R_TF=KF_s.$ 
%\begin{equation}
%R_TF=KF_s. 
%\end{equation}
Thus, the following normalized sum rate, under the secure caching constraint (\ref{secrtiveconst}), is achievable for $M=N(K-1)$,
\begin{equation}
R_T=\frac{K}{K-1}. 
\end{equation}
This rate matches the cut set bound as in Section \ref{sec:lower}.
\subsection{Discussion}
The above scheme, in subsection \ref{specialcase}, satisfies only the secure caching constraint (\ref{secrtiveconst}), without ensuring the protection from any external eavesdropper that overhears the transmitted signals during the delivery phase. On the other hand, the general scheme, presented in subsections \ref{gencache} and \ref{cendel}, achieves the same rate, i.e., $R_T^C=\frac{K}{K-1}$, when $M=N(K-1)+\frac{K}{K-1}$, while satisfying the secure caching constraint (\ref{secrtiveconst}) and the secure delivery constraint (\ref{securedelivery}), simultaneously. In other words, an additional memory at each user with size $\frac{K}{K-1}F$ bits is required to ensure the additional requirement of secure delivery.

We observe that the encryption keys serve to satisfy both the secure caching and secure delivery requirements. Therefore, one can think about the satisfaction of the secure delivery requirement as a byproduct of the general scheme in subsections \ref{gencache} and \ref{cendel}, i.e., the secure delivery comes for free while satisfying the secure caching constraint, whenever $M \leq \frac{N(K-2)}{2}+\frac{1}{K-2}+1$. 
Under different network topologies, secure delivery may require additional cost. For example, in recent reference \cite{zewail2017combination}, we have shown that there is no need to use encryption keys to satisfy the secure caching requirements in the setting of combination networks. This is possible due to the unicast nature of the network links, which is not the case in the system under investigation, as we assume that the users communicate with each other via multicast links.

%  We present a decentralized sequential coded caching scheme that guarantees a practical self-sustainability of the system, i.e., the users are capable of satisfying their requests without the participation of the server during the delivery phase.
\section{Decentralized Coded Caching Scheme}\label{sec:dec}
In this section, we provide a decentralized coded caching scheme, \cite{maddah2015decentralized}, for our setup. The proposed scheme is motivated by the ones in \cite{jin2016new} for multicast coded caching setup without secrecy requirements \cite{maddah2014fundamental} \cite{maddah2015decentralized}. It does not require the knowledge of the number of active users of the delivery phase during cache placement. This scheme operates over two phases as follows.
\vspace{-.2 in}
\subsection{Cache Placement Phase}\label{dec:place}
The main idea of the cache placement scheme is to design the cache contents for a number of users $L$ that is less than the number of users in the system during the delivery phase, i.e., $K$. $L$ is in effect a lower bound on the expected number of active users in the system. 

For a given $L$ and $M=\frac{Nt}{L-t}+\frac{2}{t}+1$, and $t\in[L-1]$, each file in the database is encoded using a suitable non-perfect secret sharing scheme. In particular, a file, $W_n$, is encoded using $\left(t {{L-1}\choose {t-1}},t{{L}\choose {t}}\right)$ non-perfect secret sharing scheme. We obtain $t{{L}\choose {t}}$ shares, each with size $\bar F_s$, where
\begin{equation}
\bar F_s=\frac{F}{t {{L}\choose {t}}-t {{L-1}\choose {t-1}}}=\frac{F}{(L-t){{L-1}\choose {t-1}}}.
\end{equation}
Each share is denoted by $S_{n,\mathcal{T}}^j$, where $n$ is the file index, i.e., $n\in[N]$, $j=1,\ldots,t$, and $\mathcal{T}\subseteq [L]$ with $|\mathcal{T}|=t$. 
The server prepares the following set of cache contents, $\bar{Z_l}$, 
\begin{equation}
\bar{Z_l}=\{S_{n,\mathcal{T}}^j: l\in \mathcal{T}, \ \ \forall j, n\}, \qquad \qquad l=1,2,\ldots,L.
\end{equation}
Once user $k$ joins the system, it caches the content $\bar Z_{l_k}$ where $l_k=k \mod L$. Such allocation results in dividing the set of active users into $\lceil\frac{K}{L}\rceil$ virtual groups. In particular, we group the first $L$ users to join the system in group $1$, and the users from $L+1$ to $2L$ in group $2$ and so on. Note that each group from $1$ to $\lceil\frac{K}{L}\rceil-1$ contains $L$ users, and the group $\lceil\frac{K}{L}\rceil$ contains $K-(\lceil\frac{K}{L}\rceil-1)L$ users. These groups are formed sequentially in time.

As explained in Section \ref{sec:ach}, we require the server to generate a set of random keys to be shared between the users. For group $u$, $u=1,\ldots,\lceil\frac{K}{L}\rceil-1$, the server generates the keys $K_{u,\mathcal{T}_K}^i$, where $i=1,\ldots,t+1$, $\mathcal{T}_K\subseteq [L]$ and $|\mathcal{T}_K|=t+1$. Each key is uniformly distributed over $[2^{\bar F_s}]$. User $l_k$ from group $u$ stores the keys $K_{u,\mathcal{T}_K}^i$, $\forall i$, whenever $l_k\in \mathcal{T}_K$. 

In addition, the server generates the keys $K_{u^*,\mathcal{T}_K}^i$, where $i=1,\ldots,t+1$, and $\mathcal{T}_K\subseteq [L], |\mathcal{T}_K|=t+1$, and allocates these keys in the cache memories of the users in groups $1$ and $\lceil\frac{K}{L}\rceil$ as follows. The keys $\{K_{u^*,\mathcal{T}_K}^i, \ \forall i\}$ are cached by user $l_k$ from group $\lceil\frac{K}{L}\rceil$, as long as $l_k\in \mathcal{T}_K$. User $l_k$ from group $1$ stores the keys $K_{u^*,\mathcal{T}_K}^j$ for only one specific $j$ whenever $l_k\in \mathcal{T}_K$. This index $j$ is chosen such that the users from group $1$ store different keys.  

In summary, at the end of cache placement, cache contents of user $k$ are given by 
\begin{equation}\label{decenall}
Z_k=\begin{cases} &\{\bar Z_{l_k}, K_{1,\mathcal{T}_K}^i,K_{u^*,\mathcal{T}_K}^j: l_k\in \mathcal{T}_K, \forall i, \mbox{ for a specific } j\}, \\
 & \qquad \qquad \qquad \qquad \mbox{if } 1 \leq k \leq L,\\
&\{ \bar Z_{l_k}, K_{u,\mathcal{T}_K}^i: u=\lceil\frac{k}{L}\rceil, l_k\in \mathcal{T}_K, \forall i\}, \\
 & \qquad \qquad \qquad \qquad \mbox{if } L\!+\!1 \leq k \leq K\!-\!(\lceil\frac{K}{L}\rceil\!-\!1)L,\\
&\{ \bar Z_{l_k}, K_{u^*,\mathcal{T}_K}^i: l_k\in \mathcal{T}_K, \forall i\}, \\
 & \qquad\qquad \qquad \qquad \mbox{if } K\!-\!(\lceil\frac{K}{L}\rceil\!-\!1)L+\!1 \leq k \leq K.
\end{cases}
\end{equation}
 
\begin{remark}
We need to ensure that this allocation procedure does not violate the memory capacity constraint at each user. Observe that each user stores the same amount of the encoded file shares, however, the users from group $1$ stores more keys than the other users. Thus, satisfying the memory constraint at the users in group $1$ implies satisfying the memory constraint at all network users.  
Each user in group $1$ stores $Nt{{L-1}\choose{t-1}}$ shares and $(t+2){{L-1}\choose{t}}$ keys. Thus, the total number of the stored bits is given by
\begin{equation}\label{memacc2}
Nt{{L\!-\!1}\choose{t\!-\!1}}\bar F_s\!+\!(t\!+\!2){{L\!-\!1}\choose{t}}\bar F_s\!=\!\frac{Nt}{L\!-\!t}F\!+\!(1\!+\!\frac{2}{t})F\!=\!MF,
\end{equation}
and from (\ref{memacc2}), we get
%\small
\begin{align}
%t&=\frac{2+(M-1)L+\sqrt{\left(2+(M-1)L\right)^2-8L(N+M-1)}}{2(N+M-1)}\nonumber \\
t=\frac{2+(M-1)L+\sqrt{\left(2-(M-1)L\right)^2-8LN}}{2(N+M-1)}.
\end{align}
Therefore, the proposed scheme satisfies the cache capacity constraint at each user.
\end{remark}
\subsection{Coded Delivery Phase}\label{dec:delivery}
We focus our attention on the worst case demand, where $K$ users request $K$ different files. The delivery phase is divided into $\lceil\frac{K}{L}\rceil$ stages. At each stage, we focus on serving the users of one group. For any stage $u$, where $u=1,\ldots,\lceil\frac{K}{L}\rceil-1$, the delivery process during stage $u$ is performed in a way similar to the one described in subsection \ref{cendel} with $K=L$ to serve the requests of users in group $u$. In particular, at each transmission instance, we consider $\mathcal{S}\subseteq [L]$, where $|\mathcal{S}|=t+1$. User $k$, with $l_k\in \mathcal{S}$, multicasts a signal, of length $\bar F_s$ bits, given by
\begin{equation}
K^i_{u,\mathcal{S}} \oplus_{l_v \in \mathcal{S}\setminus\{l_k\} } S^j_{{d}_v,\mathcal{S} \setminus\{l_v\}},
\end{equation}
where the index $i$ is chosen in way that guarantees the uniqueness of the key utilized for each transmission. 
From the cache placement phase, we observe that any $t$ users belong to the set $\mathcal{S}$ share $t$ shares of the file requested by the remaining user that is in $\mathcal{S}$. Thus, each user in $\mathcal{S}$ obtains $t$ shares from its requested file during this instance of transmission. At the end of stage $u$, each user from group $u$ can decode its requested file from its $t{L \choose {t}}$ shares.

Since, there are $L \choose {t+1}$ different choices of the set $\mathcal S$, and for each choice $t+1$ signals of length $\bar F_s$ are transmitted, the total number of the transmitted bits to serve the users from group $u$ is
\begin{equation}
R_uF=(t+1){L \choose {t+1}} \bar F_s=\frac{L}{t}F, \qquad u=1,\ldots,\lceil\frac{K}{L}\rceil-1.
\end{equation} 
Now, we focus on serving the users of the last group, i.e., group $\lceil\frac{K}{L}\rceil$. First, recall that the number of users in this group is $p\triangleq K-(\lceil\frac{K}{L}\rceil-1)L< L$, thus these users cannot satisfy their requests via device-to-device communications between them only. We require some of the users from group $1$ to participate in this last stage of the delivery phase. In particular, the users indexed by $l_k$, with $\l_k=p+1,\ldots,L$, from group $1$ forms a virtual group with the users from group $\lceil\frac{K}{L}\rceil$, such that the resulting group contains $L$ users. 
Note that, at this stage, the requests of the users from group $1$ have been already served, during stage $1$. Therefore, at each transmission instance, we consider only the sets $\mathcal{S}\subseteq[L]$, where $|\mathcal{S}|=t+1$ with $l_k\in \mathcal{S} $ and $l_k \in[p]$. We define the sets $\mathcal S_{u^*}$ and $\mathcal S_{u^*}^c$ to represent the subset of $\mathcal{S}$ that contains the users from group $\lceil\frac{K}{L}\rceil$ and group $1$, respectively, i.e., $\mathcal S_{u^*} \cup \mathcal S_{u^*}^c=\mathcal{S}$. Since, we only care now about serving the users in group $\lceil\frac{K}{L}\rceil$, we neglect any set $\mathcal S$ with $\mathcal S_{u^*}=\{\}$. For the sets that contain only one user, user $k$, from group $\lceil\frac{K}{L}\rceil$, i.e., $l_k \in \mathcal S_{u^*}$, $|\mathcal S_{u^*}|=1$, and $|\mathcal S_{u^*}^c|=t$, each user in the set $\mathcal S_{u^*}^c$ transmits 
\begin{equation}
K^j_{{u^*},\mathcal{S}} \oplus S^i_{{d}_{k},\mathcal{S} \setminus\{l_k\}},
\end{equation}
where $i$ is chosen to ensure that from every transmission user $k$ obtains a different share from its requested file.\newline 
For the sets that contain more than one user from group $\lceil\frac{K}{L}\rceil$, i.e., $|\mathcal S_{u^*}| \geq 2$, each user in the set $\mathcal{S}$ multicasts a signal of length $\bar F_s$ given by
\begin{equation}
K^j_{u^*,\mathcal{S}} \oplus_{l_v \in \mathcal S_{u^*}\setminus\{l_k\} } S^i_{{d}_v,\mathcal S_{u^*} \setminus\{l_v\}}.
\end{equation}
By taking into account all possible sets with $|\mathcal S_{u^*}|\geq 1$, the total number of the transmitted bits during this stage is given by
\begin{equation}
R_{u^*}F=pt {{L-p} \choose {t}} \bar F_s+\sum_{u=2}^{\min(p,t)} (t+1){{L-p} \choose {t-u+1}} \bar F_s.
\end{equation} 
%Therefore, $R_{u^*}$ can be expressed as
%\begin{equation}
%R_{u^*}=\frac{nt {{L-n} \choose {t}}+\sum_{u=2}^{n} (t+1){{L-n} \choose {t-u+1}}}{(L-t){{L-1} \choose {t-1}}}.
%\end{equation}
Consequently, we can obtain the following upper bound on the normalized sum rate
%\small  
\begin{equation}
R_T^{D}= R_{u^*}+\left(\lceil\frac{K}{L}\rceil\!-\!1\right)R_u.
\end{equation}
\begin{theorem}
For any integer $L\leq K$, $M=\frac{Nt}{L-t}+\frac{2}{t}+1$, and $t\in[L-1]$, the secure sum rate under decentralized coded caching is upper bounded by
\begin{align}
R_T^*\leq R_T^D&\leq\frac{2L(N+M-1)\left(\lceil\frac{K}{L}\rceil-1\right)}{2\!+\!(M\!-\!1)L\!+\!\sqrt{\left(2\!-\!(M\!-\!1)L\right)^2\!\!-\!8LN}}\nonumber \\
& \qquad+\frac{pt \langle{{L-p} \choose {t}} \rangle+\sum_{u=2}^{\min(p,t)} (t+1)\langle{{L-p} \choose {t-u+1}}\rangle}{(L-t){{L-1} \choose {t-1}}}, 
\end{align}
where $\langle{{h} \choose {r}}\rangle={{h} \choose {r}}$ whenever $h\geq r$ and $0$ otherwise, and $p=K-(\lceil\frac{K}{L}\rceil-1)L$. In addition, the convex envelope of the above points, defined for each $M$, is also achievable.
\end{theorem}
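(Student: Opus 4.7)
The plan is to verify that the scheme described in Sections~\ref{dec:place}--\ref{dec:delivery} simultaneously satisfies the cache capacity constraint, the reliability constraint~\eqref{reliableconst}, and the secure caching constraint~\eqref{secrtiveconst}, and that its total number of transmitted bits matches the claimed expression; the convex envelope then follows from standard memory sharing \cite{maddah2014fundamental}. Since the construction has already been laid out, the proof will be essentially a verification, organized stage by stage.

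First, I would dispense with the memory constraint. Using $\bar F_s=F/((L-t)\binom{L-1}{t-1})$, users in the first group store the most content, and the calculation in~\eqref{memacc2} shows that their total cache usage equals $MF$ exactly; all other users store strictly fewer keys, so the capacity is respected across the network. Inverting $M=\frac{Nt}{L-t}+\frac{2}{t}+1$ for $t$ gives the closed form appearing in the theorem, which lets me rewrite $L/t$ and $\bar F_s^{-1}$ as the pre-factor $\frac{2L(N+M-1)}{2+(M-1)L+\sqrt{(2-(M-1)L)^2-8LN}}$.

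Second, I would check reliability stage by stage. For $u=1,\dots,\lceil K/L\rceil-1$, stage $u$ is an exact instantiation of the centralized procedure of Section~\ref{cendel} over the $L$ users of group $u$, using the keys $K^i_{u,\mathcal{T}_K}$: each user obtains $t\binom{L-1}{t}$ fresh shares, combining with the $t\binom{L-1}{t-1}$ cached shares to yield the full $t\binom{L}{t}$ shares needed to invert the non-perfect secret sharing scheme and recover $W_{d_k}$. The final stage is the delicate one: the $p=K-(\lceil K/L\rceil-1)L<L$ users of the last group are augmented by $L-p$ already-served users of group~$1$ (those with cache label $l_k\in\{p+1,\dots,L\}$) so as to reconstitute a virtual group of $L$ users. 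I would then verify that, summed over all sets $\mathcal{S}$ with $|\mathcal{S}_{u^{*}}|\ge 1$, each user in the last group again receives every share of $W_{d_k}$ that is missing from its cache. The sets with $|\mathcal{S}_{u^{*}}|\ge 2$ reproduce the Section~\ref{cendel} transmissions restricted to $\mathcal{S}_{u^{*}}$, while the sets with $|\mathcal{S}_{u^{*}}|=1$ compensate for the shares that can no longer be generated internally because only $p$ users in the last group transmit.

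Third, I would check secrecy, which I expect to be the main obstacle. For users in groups $2,\dots,\lceil K/L\rceil-1$ the argument mirrors the centralized case: their cache contains only $t\binom{L-1}{t-1}$ shares of each file (below the $(t\binom{L-1}{t-1},t\binom{L}{t})$ secret sharing threshold), and every transmission in a stage they do not belong to is one-time padded by a key $K^i_{u,\mathcal{T}_K}$ that they do not possess. For users in group~$1$, I must argue that the keys $K^j_{u^{*},\mathcal{T}_K}$ used during the final stage are unknown to them: each user of group~$1$ with $l_k\in\mathcal{T}_K$ holds exactly one index $j$ of this key family, so the index assignment across the $\binom{L-1}{t}$ transmissions of the last stage in which $l_k$ does not participate must be shown to avoid that particular $j$; I would make this selection explicit (for example by reserving $j=1$ to pad the unique transmission in which user $l_k$ from group~$1$ itself is the sender in the final stage, and using the $t$ remaining indices for the other transmissions, exploiting that $t+1$ keys are generated per $\mathcal{T}_K$). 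For users in the last group, the cache again holds only $t\binom{L-1}{t-1}$ shares per file, and the final-stage transmissions pertaining to other users' requests are padded by keys unknown to them, while all earlier stages use keys $K^i_{u,\cdot}$ for $u<\lceil K/L\rceil$ that they never cache.

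Fourth, the rate accounting. Summing $\lceil K/L\rceil-1$ copies of $R_u F=(t+1)\binom{L}{t+1}\bar F_s=(L/t)F$ for the full groups, and adding $R_{u^{*}}F=pt\langle\binom{L-p}{t}\rangle\bar F_s+\sum_{u=2}^{\min(p,t)}(t+1)\langle\binom{L-p}{t-u+1}\rangle\bar F_s$ for the last stage, and finally substituting the closed form of $t(M)$, produces the right-hand side of the theorem. Memory sharing between two achievable operating points $(M_1,R_T^{D}(M_1))$ and $(M_2,R_T^{D}(M_2))$ on library splits $W_n=(W_n^1,W_n^2)$ then yields every point on the lower convex envelope, completing the proof.
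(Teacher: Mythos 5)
Your proposal is correct and follows essentially the same route as the paper: the theorem is established by the explicit scheme of Section~\ref{sec:dec}, verifying the memory budget via~\eqref{memacc2}, reliability group by group (with the last group augmented by already-served group-1 users), secrecy via the below-threshold share allocation and one-time pads, and then summing $(\lceil K/L\rceil-1)R_u+R_{u^*}$ and substituting $t(M)$. Your explicit handling of the index assignment for the $K^j_{u^*,\mathcal{T}_K}$ keys held by group-1 users is a point the paper treats only informally, but it is consistent with the paper's construction.
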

Using memory sharing \cite{maddah2014fundamental}, we can achieve the convex envelope of the points given by the values $M=\frac{Nt}{L-t}+\frac{2}{t}+1$, and $t\in[L-1]$.% This concludes the proof of Theorem $2$.

\subsection{Discussion}
In the decentralized coded caching scheme proposed in \cite{sengupta2015fundamental}, for server-based coded caching with secure delivery, key placement is done in a centralized manner after a decentralized caching of a fraction $\frac{M-1}{N-1}$ of each file, without the knowledge of the users' demands. For server-based systems with secure caching, a decentralized scheme was proposed in \cite{ravindrakumar2016fundamental}. 

We note that developing decentralized schemes for D2D coded caching systems is more involved compared with decentralized schemes for server-based coded caching systems \cite{maddah2014fundamental}. This due to the requirement that in D2D the server must disengage the delivery process, i.e., the end users collectively must possess pieces of the entire library. When there are no secrecy requirements, reference \cite{ji2016fundamental} has proposed a decentralized D2D coded caching scheme, which utilizes maximum distance separable (MDS) codes to encode the files at the server to satisfy the users' requests without the participation of the server during the delivery phase.

For our D2D secure coded caching, we utilize a grouping-based approach that allows disengaging the server from the delivery phase, and the key placement is done during the cache placement phase, without the knowledge of the users' demands. We choose this grouping-based approach instead of utilizing the MDS encoding in \cite{ji2016fundamental}, as  each user not only needs to store the keys used in encrypting its intended signals but also the keys that are used to encrypt its transmitted signals. Therefore, applying a decentralized cache placement based on MDS coding requires the users to dedicate a large fraction of their cache memories for the keys to be allocated by the server during the delivery phase after announcing the demand vector. By contrast, our proposed scheme ensures a practical self-sustainable system with reasonable fraction of each cache memory dedicated to encryption keys. Once a group of $L$ users joins the system, sequentially, the server can place the keys in the memories of these users. At the end of the cache placement phase, before the beginning of the delivery phase, the server allocates the keys to be used in encrypting the signals intended to the last group in the caches of the users of the first and the last group. We remark that this grouping-based approach can be used to develop new decentralized schemes for multicast coded caching scenarios with secure delivery and secure caching that were considered in \cite{sengupta2015fundamental,ravindrakumar2016fundamental,awan2015fundamental} as well.

 We observe, from (\ref{decenall}), that the cache memory at each user is divided into two partitions, one for the shares of the files, and the other one for the keys. The partition assigned for the shares $\bar{Z_l}$ can be identical in multiple users. Thus, encrypting the signal with a one-time pad is necessary to satisfy the secure caching requirement. Note that each user from group $1$ which participates in the last stage of the delivery phase, knows only the keys that it will use to encrypt its transmitted signal, thus it cannot gain any new information from the signals transmitted during this last stage. The secure delivery requirement is satisfied as a byproduct as in the centralized scheme of Section \ref{sec:ach}. %Our proposed scheme is the first decentralized scheme that is proposed for device-to-device coded caching systems with secrecy requirements, i.e., secure delivery or secure caching.

\subsubsection{The Choice of $L$}
 A key element in designing the aforementioned semi-decentralized scheme, is the choice of the parameter $L$, which can be determined by observing the number of users in the system during the peak traffic hours over a sufficient amount of time. Then, we can choose $L$ as the minimum value among the observed numbers of users in the system.
  We note that as long as $L$ is close to $K$, the exact number of active users in the system, we can benefit from more multicast opportunities which helps in reducing the overall delivery load.   
	
We note that a minor potential drawback of the provided scheme is that some users cache memories may be under-utilized by a small fraction. In particular, other than the users from group 1 who participate in serving the last group, a very small fraction of size, $\frac{1}{t}$, that is not scaled with the library size, is not utilized from each user memory. This fraction can be see as a cost for disengaging the server from the delivery phase. This fraction cannot be used to cache from data directly due to the secure caching requirement. A good estimate of $L$, i.e., choosing $L$ close to $K$, will reduce the number of users that do not fully utilize their memory.

\subsubsection{User Mobility During the Placement Phase}	
 If a user, $f$, leaves the system during the cache placement phase, then its cached contents, $Z_f$, should be assigned by the server to populate the cache memory of the first user to join the system after this departure. If no user joins the system before the beginning of the delivery phase, then the server can update the contents of the last user that joined the system with $Z_f$.
\section{Lower Bound}\label{sec:lower}
In this section, we derive a lower (converse) bound on the normalized sum of the required sum rate. The derivation is based on cut-set arguments \cite{cover2006elements}, similar to \cite{ji2016fundamental,sengupta2015beyond}. 

Assume that the first $s$ users, where $s\in\{1,2,\ldots,\min(N/2,K)\}$, request the files from $1$ to $s$, such that user $i$ requests $W_i$, $i\in\{1,2,\ldots,s\}$. The remaining users are assumed to be given their requested files by a genie. We define $\bm X_1$ to represent the transmitted signals by the users to respond to these requests, i.e., $\bm X_1=\{X_{1,(1,\ldots,s)},\ldots,X_{K,(1,\ldots,s)}\}$.  At the next request instance, the first $s$ users request the files from $s+1$ to $2s$, such that user $i$ requests $W_{s+i}$. These requests are served by transmitting the signals $\bm X_2=\{X_{1,(s+1,\ldots,2s)},\ldots,X_{K,(s+1,\ldots,2s)}\}$. We proceed in the same manner, such that at the request instance $q$, the first $s$ users request the files from $(q-1)s+1$ to $qs$, such that user $i$ requests $W_{(q-1)s+i}$, and the users transmit the signals $\bm X_q=\{X_{1,((q-1)s+1,\ldots,qs)},\ldots,X_{K,((q-1)s+1,\ldots,qs)}\}$, where $q\in\{1,\ldots,\lfloor N/s \rfloor\}$. In addition, we define $\bm{ \bar X_q}=\{X_{s+1,((q-1)s+1,\ldots,qs)},\ldots,X_{K,((q-1)s+1,\ldots,qs)}\}$ to denote the set of the transmitted signals by the users indexed by $s+1$ to $K$ at request instance $q$.

 From the received signals over the request instances $1,2,\ldots,\lfloor N/s \rfloor$ and the information stored in its cache, i.e., $Z_i$, user $i$ must be able to decode the files $W_i, W_{i+s},\ldots,W_{i+(\lfloor N/s \rfloor-1)s}$. Consider the set of files $ \mathcal{ \bar W}=\{W_1,\ldots,W_{(q-1)s+k-1},W_{(q-1)s+k+1},\ldots,W_{s\lfloor N/s \rfloor} \}$, i.e., the set of all requested files excluding the file, $W_{(q-1)s+k}$, which was requested by user $k$ at the request instance $q$. Therefore, we have
\begin{align}
(s&\lfloor N/s\rfloor-1)F=H(\mathcal{ \bar W}) \nonumber \\
&\leq H(\mathcal{ \bar W})-H(\mathcal{ \bar W}|\bm X_1,\ldots,\bm X_{\lfloor N/s \rfloor}, Z_1,\ldots,Z_s)+\epsilon\label{lowerreliable}\\
&=I(\mathcal{ \bar W};\bm{ \bar X_1},\ldots,\bm  {\bar X_{\lfloor N/s \rfloor}}, Z_1,\ldots,Z_s)+\epsilon\\
&=I(\mathcal{ \bar W};\bm{ \bar X_q}, Z_k)+I(\mathcal{ \bar W};\bm{ \bar X_1},\ldots,\bm{ \bar X_{q-1}},\bm{ \bar X_{q+1}},\ldots,\bm{ \bar X_{\lfloor N/s \rfloor}}, \nonumber \\
& \qquad \qquad \quad \qquad Z_1,\ldots,Z_{k-1},Z_{k+1},\ldots,Z_s|\bm{ \bar X_q}, Z_k)+\epsilon.\label{chain}
\end{align}
Step (\ref{lowerreliable}) follows from (\ref{reliableconst}) as the users must be able to decode their requested files utilizing their cache's contents and received signals. To simplify the notation, we define 
\begin{align}
& \bm{\mathcal{X}}=\{\bm{\bar X_1},\ldots,\bm{\bar X_{q-1}},\bm{\bar X_{q+1}},\ldots,\bm{ \bar X_{\lfloor N/s \rfloor}}\} \nonumber \\ &\qquad\mbox{ and } \qquad \bm{\mathcal Z} =\{ Z_1,\ldots,Z_{k-1},Z_{k+1},\ldots,Z_s\}\nonumber.  
\end{align}
Now, (\ref{chain}) can be expressed as
\begin{align}
I(&\mathcal{ \bar W};\bm{ \bar X_q}, Z_k)+I(\mathcal{ \bar W};\bm{\mathcal{X}},\bm{\mathcal{Z}}|\bm{ \bar X_q}, Z_k)+\epsilon\nonumber \\& \leq I(\mathcal{ \bar W};\bm{\mathcal{X}},\bm{\mathcal{Z}}|\bm{ \bar X_q}, Z_k)+\epsilon+\delta \label{lowerconf}\\
&=H(\bm{\mathcal{X}},\bm{\mathcal{Z}}|\bm{ \bar X_q}, Z_k)-H(\bm{\mathcal{X}},\bm{\mathcal{Z}}|\mathcal{ \bar W},\bm{ \bar X_q}, Z_k)+\epsilon+\delta\\
&\leq H(\bm{\mathcal{X}},\bm{\mathcal{Z}}|\bm{ \bar X_q}, Z_k)+\epsilon+\delta\\ 
&\leq H(\bm{\mathcal{X}},\bm{\mathcal{Z}})+\epsilon+\delta\\
&=H(\bm{\mathcal{X}})+H(\bm{\mathcal{Z}}|\bm{\mathcal{X}})+\epsilon+\delta\\
&\leq H(\bm{\mathcal{X}})+H(\bm{\mathcal{Z}})+\epsilon+\delta\\
&\leq \sum_{j=1,j\neq q}^{\lfloor N/s \rfloor} H(\bm {\bar X_j})+\sum_{i=1,i\neq k}^s H(Z_i)+\epsilon+\delta\\
&\leq (\lfloor N/s \rfloor-1)\frac{K-s}{K}RF+(s-1)MF+\epsilon+\delta.
\end{align}
Note that step (\ref{lowerconf}) is due to (\ref{secrtiveconst}). Therefore, we can get
\begin{equation}
R_T\geq \frac{K[(s\lfloor N/s\rfloor-1)-(s-1)M]}{(\lfloor N/s \rfloor-1)(K-s)}.
\end{equation}
Taking into account all possible cuts, we obtain the lower bound stated in the following theorem. 
\begin{theorem}
 The achievable secure rate is lower bounded by
\begin{equation}\label{d2dlower}
R_T^{*}\geq \max_{s\in\{1,2,\ldots,\min(K,N/2)\}}\frac{K[(s\lfloor N/s\rfloor-1)-(s-1)M]}{(\lfloor N/s \rfloor-1)(K-s)}.
\end{equation}
\end{theorem}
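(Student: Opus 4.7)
The plan is to prove the bound via a cut-set argument applied to carefully chosen demand sequences, invoking both the reliability requirement and the secure-caching requirement at a single but decisive point in the chain. Fix $s\in\{1,\ldots,\min(K,N/2)\}$. I would consider $\lfloor N/s\rfloor$ successive demand instances in which the first $s$ users, across all instances, request the $s\lfloor N/s\rfloor$ distinct files $W_1,\ldots,W_{s\lfloor N/s\rfloor}$ in fresh disjoint blocks of $s$ files per instance; for the remaining $K-s$ users, treat the cost of their requests as paid by a genie. The caches $Z_1,\ldots,Z_s$ together with the collection of transmitted signals across all $\lfloor N/s\rfloor$ instances then determine $s\lfloor N/s\rfloor$ independent files, which provides the raw entropy budget on which the converse is built.

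Next, I would remove one file's worth, namely $W_{(q-1)s+k}$ (the file requested by user $k$ at instance $q$), and bound the entropy of the remaining $s\lfloor N/s\rfloor - 1$ files, collected as $\mathcal{\bar W}$, in a way that exploits secrecy. Starting from $(s\lfloor N/s\rfloor-1)F=H(\mathcal{\bar W})$, Fano's inequality combined with the reliability constraint \eqref{reliableconst} allows me to add $-H(\mathcal{\bar W}\mid \bm X_1,\ldots,\bm X_{\lfloor N/s\rfloor},Z_1,\ldots,Z_s)$ at a cost of $O(\epsilon)$, turning the left-hand side into a mutual information. Splitting this mutual information via the chain rule into the pair $(\bm{\bar X_q},Z_k)$ plus everything else, the secure-caching constraint \eqref{secrtiveconst} collapses $I(\mathcal{\bar W};\bm{\bar X_q},Z_k)$ to at most $\delta_1$, since that pair is exactly the information available to user $k$ in instance $q$ about files other than $W_{(q-1)s+k}$.

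What remains is an upper bound on the joint entropy of the other caches and the other instances' signals. The chain rule and subadditivity reduce this to $\sum_{j\neq q}H(\bm{\bar X_j})+\sum_{i\neq k}H(Z_i)$; each $H(Z_i)\leq MF$ by the memory constraint, so the cache piece contributes at most $(s-1)MF$. For the transmitted signals, I would use a symmetrization over demand permutations to bound $H(\bm{\bar X_j})\leq \frac{K-s}{K}R_TF$, which is the fraction of the total delivery load carried by users $s+1,\ldots,K$. Rearranging the resulting inequality gives
\begin{equation}
R_T\;\geq\;\frac{K\bigl[(s\lfloor N/s\rfloor-1)-(s-1)M\bigr]}{(\lfloor N/s\rfloor-1)(K-s)},
\end{equation}
and since $s$ was arbitrary, taking the maximum over $s\in\{1,\ldots,\min(K,N/2)\}$ yields the stated theorem.

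The main obstacle I anticipate is the rate-accounting step, namely the justification of $H(\bm{\bar X_j})\leq \tfrac{K-s}{K}R_TF$. Without a symmetrization argument the individual per-user rates $R_i$ need not be equal, and one can only say $H(\bm{\bar X_j})\leq \sum_{i=s+1}^K R_iF$; the standard fix is to symmetrize over demand permutations so that only $R_T/K$ per user effectively appears, which is legitimate because the converse is allowed to average over demand instances. Once this piece is in place, the remainder of the derivation is routine entropy calculus, and the use of $\lfloor N/s\rfloor$ rather than $N/s$ is simply to keep the number of file blocks an integer.
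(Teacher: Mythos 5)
Your proposal is correct and follows essentially the same cut-set argument as the paper: the same block-demand construction over $\lfloor N/s\rfloor$ instances, the same Fano/reliability step, the same chain-rule split isolating $I(\mathcal{\bar W};\bm{\bar X_q},Z_k)$ to be killed by the secure-caching constraint, and the same subadditivity bound $(\lfloor N/s\rfloor-1)\frac{K-s}{K}R_TF+(s-1)MF$. Your explicit flagging of the symmetrization needed to justify $H(\bm{\bar X_j})\leq\frac{K-s}{K}R_TF$ is in fact more careful than the paper, which asserts that step without comment.
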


 	\begin{remark}
We note that $R_T^{*}\geq \frac{K}{K-1}$, which is obtained by setting $s=1$ in Theorem 3, can be achieved whenever $M\geq N(K-1)$, using the proposed scheme in subsection \ref{specialcase}.
\end{remark} 
In addition, the multiplicative gap between the upper bound in Theorem \ref{them:cen} and the above lower bound is bounded by a constant as stated in the following theorem. 
\begin{theorem}\label{them:gap}
For $M\geq 2 +\frac{N}{K-1}$, there exists a constant, $c$, independent of all the system parameters, such that 
\begin{equation}
 1\leq\frac{R_{T}^{C}}{R_{T}^*}\leq c. 
  \end{equation}
\end{theorem}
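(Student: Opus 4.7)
The plan is to bound the ratio $R_T^C/R_T^*$ by partitioning the admissible range of $M$, namely $M\in\bigl[2+\tfrac{N}{K-1},\, N(K-1)\bigr]$, into a finite number of regimes, and in each regime choosing a specific value of $s$ in the converse (\ref{d2dlower}) to match, up to a universal constant, the achievable rate from Theorem \ref{them:cen}. First I would rewrite the achievability in a more convenient closed form: from the construction in Section \ref{sec:ach}, $R_T^C=K/t$ with $t$ the integer solution of $M=\frac{Nt}{K-t}+\frac{1}{t}+1$ (and the convex envelope in between by memory sharing). I would then use the elementary inequality $t\geq \tfrac{(M-1)K}{N+M-1}-1$ obtained by discarding the $1/t$ term, so that $R_T^C \leq \frac{K(N+M-1)}{(M-1)K-(N+M-1)}$, which will be used as the working upper bound.

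Next I would split into three regimes according to the effective parameter $\eta=M/N$ (and a separate treatment for small $K$). For the \emph{small-memory} regime $M\leq c_1 N/K$-type values (i.e.\ near $M_{\min}$), I would pick $s=\min(K,\lfloor N/2\rfloor)$, which makes $\lfloor N/s\rfloor\in\{2,3\}$ and turns the lower bound into $R_T^*\geq \Omega(K)$, matching $R_T^C=\Theta(K/t)=\Theta(K)$ for $t=O(1)$. For the \emph{intermediate} regime, I would choose $s$ proportional to the same $t$ that characterizes the achievability, more precisely $s=\bigl\lceil \sqrt{N/M}\bigr\rceil$ or $s=\lfloor N/(2M)\rfloor$ (whichever keeps $s\leq\min(K,N/2)$), so that $s\lfloor N/s\rfloor\approx N$ and $(s-1)M\leq N/2$, yielding $R_T^*\gtrsim \frac{K s}{N/s}\cdot\frac{1}{K-s}\cdot \frac{N}{2}=\Omega(K s^2/(N-s))$, which matches $R_T^C\leq K/t=O(N/(sM))$ up to a constant after substituting the relation between $M$ and $s$. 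For the \emph{large-memory} regime, where $M$ is a nonvanishing fraction of $N(K-1)$, the achievable rate is already $O(1)$ and the converse with $s=1$ gives $R_T^*\geq K/(K-1)\geq 1$, so the ratio is trivially bounded.

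Once the ratio is bounded in each regime by a constant $c_i$ depending only on the choice of $s$ (hence not on $N,K,M$), setting $c=\max_i c_i$ produces the uniform constant claimed in the theorem. Memory sharing needs a brief separate argument: since $R_T^C$ is piecewise linear in $M$ over the breakpoints $M_t=\frac{Nt}{K-t}+\frac{1}{t}+1$ and $R_T^*$ is concave (more precisely, the maximum of linear functions in $M$ with negative slope), the ratio at any memory-shared point is bounded by the maximum of the ratios at the two nearest breakpoints, so verifying the bound only at the breakpoints suffices.

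The main obstacle will be the middle regime, where the interplay between the floor $\lfloor N/s\rfloor$ in the converse and the exact form of $M=\frac{Nt}{K-t}+\frac{1}{t}+1$ in the achievability has to be controlled uniformly, in particular ensuring that the chosen $s$ satisfies $s\leq \min(K,N/2)$ and that the numerator $s\lfloor N/s\rfloor-1-(s-1)M$ remains a positive constant fraction of $N$; a secondary difficulty is handling edge cases where $K$ is a small constant (for which $R_T^C$ and $R_T^*$ are both $O(1)$ and the bound follows directly) separately from the asymptotic regime.
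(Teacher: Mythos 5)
Your strategy---a direct regime-splitting analysis in which a tailored cut-set parameter $s$ is matched against $R_T^C=K/t$ in each memory regime---is the classical route to such constant-gap results and could in principle be pushed through, but it is genuinely different from, and considerably heavier than, what the paper does. The paper's proof is a three-step reduction: (i) it bounds the ratio of $R_T^C$ to the \emph{server-based} secure-caching achievable rate $R_{\text{Multicast}}=\frac{K(N+M-1)}{N+(K+1)(M-1)}$ of \cite{ravindrakumar2016fundamental} by the explicit constant $5$, using only $M-1\geq \frac{N}{K-1}+1$; (ii) it observes that for every $s$ the D2D cut-set bound (\ref{d2dlower}) equals $\frac{K}{K-s}\geq 1$ times the multicast cut-set bound (\ref{multicatlower}), hence $R_{\text{Multicast}}^*\leq R_T^*$; (iii) it imports the known constant gap $R_{\text{Multicast}}/R_{\text{Multicast}}^*\leq c''$ from \cite[Theorem 3]{ravindrakumar2016fundamental} and chains the three ratios. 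All of the regime analysis you propose is thereby outsourced to the cited reference. Your approach buys self-containedness and potentially an explicit constant not filtered through another paper's $c''$; the paper's buys brevity and immunity to exactly the difficulties you flag.

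As written, your sketch leaves concrete issues open beyond the acknowledged ``main obstacle.'' First, discarding the $1/t$ term in $M=\frac{Nt}{K-t}+\frac{1}{t}+1$ yields the \emph{upper} bound $t\leq\frac{(M-1)K}{N+M-1}$, not a lower bound; what you need is $t\geq\frac{(M-2)K}{N+M-2}$ (from $1/t\leq 1$), and the resulting working bound on $R_T^C$ degenerates as $t\to 1$, so it cannot serve uniformly and the small-memory regime must use $R_T^C\leq K$ directly. Second, in the intermediate regime the cut-set bound with $s\lfloor N/s\rfloor\approx N$ and $(s-1)M\leq N/2$ gives $R_T^*\gtrsim \frac{Ks}{2(K-s)}$, not $\Omega(Ks^2/(N-s))$; the match to $R_T^C\approx\frac{N+M-1}{M-1}$ with $s\approx N/(2M)$ still appears to go through, but the algebra must be redone. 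Third, the memory-sharing step is backwards: the right-hand side of (\ref{d2dlower}) is a maximum of affine functions of $M$, hence convex, and for a convex lower bound the endpoint inequalities $f(M_i)\leq c\,g(M_i)$ do not transfer to interpolated $M$ by the argument you give. The standard fix is to work within each regime against the single affine bound for the fixed $s$ of that regime, where the ratio of two affine functions is monotone and the endpoint check suffices.
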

\begin{proof}
See the Appendix.
\end{proof}
\section{Numerical Results}\label{sec:numerical}
\begin{figure}[t]
\includegraphics[width=3.2 in,height=2.2 in]{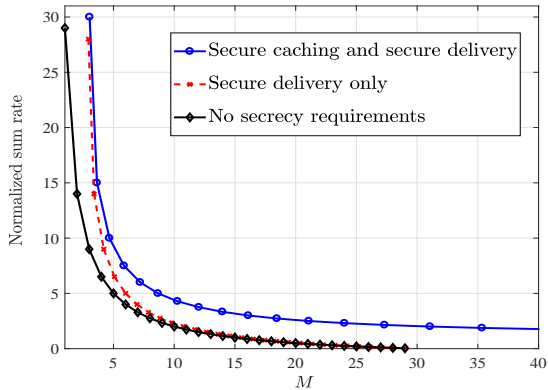}
\centering
\caption{\small Comparison between the required transmission rates under different system requirements for $N=K=30$.}\label{fig_diff_requirements}
\end{figure}
In this section, we demonstrate the performance of the proposed schemes numerically. Fig. \ref{fig_diff_requirements} shows the performance of D2D coded caching systems under different requirements. In particular, we compare our system that provides both secure caching and secure delivery with the system with just secure delivery \cite{awan2015fundamental} and the one with no secrecy constraints \cite{ji2016fundamental}. For the latter two cases, the rate is equal to zero wherever $M\geq N$, as the entire database can be stored in each cache memory. However, by setting $s=1$ in the lower bound stated in Theorem $3$, we get that the sum rate under secure caching is bounded below by $\frac{K}{K-1}$. 
\begin{figure}[t]
\includegraphics[width=3.2 in,height=2.2 in]{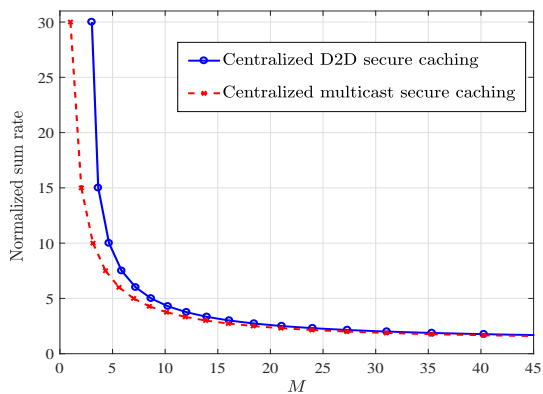}
\centering
\caption{\small The achievable secure rates for the single server and D2D coded caching for $N=K=30$.}\label{fig_single_vs_D2D}
\end{figure}

In Fig. \ref{fig_single_vs_D2D}, we compare the performance of our system and the one considered in \cite{ravindrakumar2016fundamental}. As expected, the system, in \cite{ravindrakumar2016fundamental}, where the server, with full access to the file shares, is responsible for the delivery phase, achieves lower transmission rate compared with the considered setup where the delivery phase has to be performed by users, each of which has limited access to the files shares. Interestingly, we observe that the gap between the required transmission rates vanishes as $M$ increases, i.e., the loss due to accessing a limited number of shares at each user is negligible when $M$ is sufficiently large. 

Fig. \ref{fig_uppervslower} shows that the gap between the lower and upper bounds decreases as $M$ increases. As mentioned before, Theorem $3$ points out that the sum rate is bounded below by $\frac{K}{K-1}$, by setting $s=1$. For large enough $M$, our proposed schemes achieves a sum rate equal to $\frac{K}{K-1}$, which matches the lower bound. 

In Fig. \ref{fig_decen_vs_cen}, we plot the achievable rates under different choices of $L$ in a system with $K=100$. It is worth noting that even with $L=60$, which is much smaller than the  number of users in the system ($K$), the gap between the achievable rate using the decentralized and centralized schemes is negligible for realistic values of $M$. In other words, even with a inaccurate lower bound of the number of users in the system $K$, the proposed decentralized scheme performs very close to the centralized one.  

\begin{figure}
%\vspace{-.1 in}
\includegraphics[width=3.2 in,height=2.2 in]{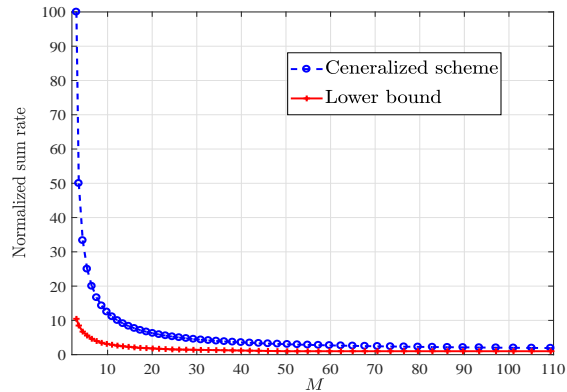}
\centering
\caption{\small The upper bound vs the lower bound for $N=K=100$.}\label{fig_uppervslower}
\end{figure}
\begin{figure}
\vspace{.1 in}
\includegraphics[width=3.2 in,height=2.2 in]{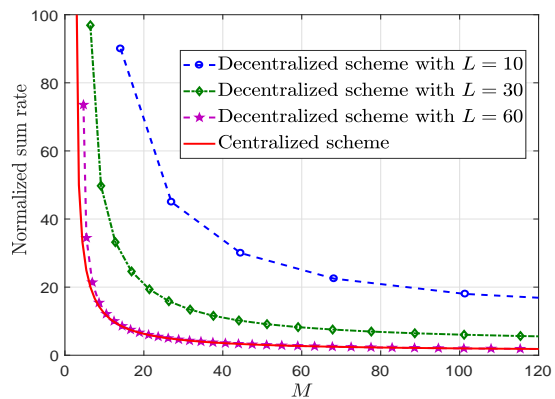}
\centering
\caption{\small Achievable rates via decentralized and centralized schemes $N=K=100$.}\label{fig_decen_vs_cen}
\end{figure}
\vspace{-.15 in}
\section{Conclusions }\label{sec:con}
In this work, we have characterized the fundamental limits of secure device-to-device coded caching systems. We have investigated a cache-aided network, where the users' requests must be served via D2D communications only. We have imposed secure caching constraint on all users, i.e., a user cannot obtain any information about any file that he had not requested. We have developed an achievable centralized coded caching scheme for this network, where the server encodes each file using a proper non-perfect secret sharing scheme and generates a set of random keys. The resulting shares and keys are carefully placed in the users' cache during the cache placement phase. After announcing the users' demands, each user transmits a one-time padded signal to the remaining users. In addition, we have provided a sequential decentralized scheme that does not require the knowledge of the number of the active users for cache placement. As a byproduct of the proposed achievability schemes, the system also keeps the files secret from any external eavesdropper that overhears the delivery phase, i.e., the secure delivery is also guaranteed. We have derived a lower (converse) bound based on cut-set arguments. Furthermore, we have shown that our proposed scheme is order-optimal and optimal in the large memory region. Our numerical results indicate that the gap between the lower and upper bounds decreases as the cache memory capacity increases. Similarly, the performance of centralized and decentralized schemes are very close for large memories. Overall, we have shown that the D2D communications can replace the server in the delivery phase with a negligible transmission overhead. This offers an affirmation of D2D communications' significant role in upcoming communication systems.     
    \appendices
    \vspace{-.1 in} 
\section{Proof of Theorem \ref{them:gap} }
First, we show that the multiplicative gap between the achievable rate in \cite{ravindrakumar2016fundamental} for multicast coded caching and the achievable rate in Theorem \ref{them:cen} can be bounded by a constant. We recall the upper and lower bounds from \cite{ravindrakumar2016fundamental}.
\begin{equation}
 R_{\text{Multicast}}\triangleq\frac{K(N+M-1)}{N+(K+1)(M-1)},  
\end{equation}    
\begin{equation}\label{multicatlower}
R_{\text{Multicast}}^{*}\geq \! \max_{s\in\{1,2,\ldots,\min(K,N/2)\}}\frac{(s\lfloor N/s\rfloor\!-\!1)\!-\!(s\!-\!1)M}{(\lfloor N/s \rfloor\!-\!1)}. 
\end{equation}

Therefore, we have 
\begin{align}
\frac{R_T^C}{R_{\text{Multicast}}}=\frac{2(N+(K+1)(M-1))}{1+(M-1)K+\sqrt{(1-(M-1)K)^2-4KN}}.
\end{align}
To simplify the notation, let $U=M-1$ and $V=\sqrt{(1-KU)^2-4KN}$. Then, we have  
%=2\frac{(N+KU+U)}{1+KU+V},
\begin{align}
\frac{R_T^C}{R_{\text{Multicast}}}\!&\!=\frac{2KU}{1\!+\!KU\!+\!V}\!+\!\frac{2U}{1\!+\!KU\!+\!V}\!+\!2\frac{N}{1\!+\!KU\!+\!V},\\
&\leq 2+1+2\frac{N}{1+KU+V}.
\end{align}
Note that the minimum value of $M$=$\frac{N}{K-1}+2$, thus we have $U\geq \frac{N}{K-1}$ and 
%&\leq 4+2\frac{N}{1+KU+V},\\
\begin{align}
\frac{R_T^C}{R_{\text{Multicast}}}\leq 3+2\frac{N}{1+\frac{KN}{K-1}+V}\leq 5 =c'.
\end{align} 
Now, consider 
\begin{align}
\frac{R_T^C}{R_T^*}&=\frac{R_T^C}{R_{\text{Multicast}}}\times \frac{R_{\text{Multicast}}}{R_T^*}\leq\frac{R_T^C}{R_{\text{Multicast}}}\times \frac{R_{\text{Multicast}}}{R_{\text{Multicast}}^*},\label{apped1}\\
&\leq c^{'} \times c^{''}=c.
\end{align}
We observe that for any value for $s$, the RHS of (\ref{d2dlower}) equals $\frac{K}{K-s}$ RHS of (\ref{multicatlower}), thus we have $R_{\text{Multicast}}^*\leq R_{T}^*$ and we can get we get (\ref{apped1}). The last step follows from \cite[Theorem 3]{ravindrakumar2016fundamental}, i.e., $\frac{R_{\text{Multicast}}}{R_{\text{Multicast}}^*}\leq c^{''}$, where $c^{''}$ is a constant independent of the system parameters. 
 \bibliographystyle{IEEEtran}
\bibliography{IEEEabrv,cachingLib}

\end{document}